\documentclass[letterpaper, 10 pt, conference]{ieeeconf}  
\usepackage{cite}
\usepackage{amsmath,amssymb,amsfonts}
\usepackage{algorithmic}
\usepackage{graphicx}
\usepackage{algorithm,algorithmic}
\usepackage{hyperref}
\usepackage{subfig}
\hypersetup{hidelinks=true}
\usepackage{textcomp}
\usepackage{xcolor}
\def\BibTeX{{\rm B\kern-.05em{\sc i\kern-.025em b}\kern-.08em
    T\kern-.1667em\lower.7ex\hbox{E}\kern-.125emX}}

\newtheorem{theorem}{\textbf{Theorem}}
\newtheorem{lemma}{\textbf{Lemma}}
\newtheorem{remark}{\textbf{Remark}}
\newtheorem{assume}{\textbf{Assumption}}
\newtheorem{define}{\textbf{Definition}}
\newtheorem{corollary}{\textbf{Corollary}}
\newtheorem{problem}{\textbf{Problem}}

\usepackage{tikz}
\usepackage{tikz,tkz-euclide}

\usepackage{epstopdf} 

\IEEEoverridecommandlockouts                              

\title{\LARGE \bf
Angle-Based Formation Tracking of Underactuated Planar Agents
}

\author{Nhat-Minh Le-Phan, Yali Fan, Thien-Minh Nguyen,  \emph{Member, IEEE}, Lihua Xie*, \emph{Fellow, IEEE}
\thanks{Nhat-Minh Le-Phan and Yali Fan are with the School of Electrical and Electronic Engineering, Nanyang Technological University, Singapore 639798 (e-mails: phannhat001@e.ntu.edu.sg; fany0029@e.ntu.edu.sg).}
\thanks{Thien-Minh Nguyen is with the School of Mechanical and Mining Engineering at The University of Queensland, Brisbane, QLD 4072, Australia (e-mail: thienminh.nguyen@uq.edu.au).}
\thanks{Lihua Xie is with NTU--VinUni Joint Research Laboratory for Embodied AI and Robotics, School of Electrical and Electronic Engineering, Nanyang Technological University, Singapore 639798 and VinUniversity, Hanoi 100000, Vietnam (e-mail: elhxie@ntu.edu.sg).}
\thanks{*Corresponding author.
}}

\begin{document}

\maketitle
\thispagestyle{empty}
\pagestyle{empty}

\begin{abstract}
This paper addresses the angle-based formation tracking problem for a class of heterogeneous planar underactuated agents subject to disturbances. A representative example is a group of underactuated surface vessels (USVs) operating in the surge–sway–yaw plane, in which each vessel has three degrees of freedom but only two independent control inputs, which are surge force and yaw moment. The desired formation is characterized by a set of longitudinal offset points, referred to as \emph{hand points}, together with prescribed angular constraints among triplets of these points. The formation tracking problem is studied on a leader–follower interaction graph, assuming the leader moves at constant velocity. Under the assumption that relative velocity measurements are available, the first control law achieves asymptotic formation tracking with internal stability guarantees. Moreover, a second algorithm that does not rely on relative velocity information is introduced, which successfully drives the followers’ hand points to their desired configuration. Numerical simulations involving USVs are presented to validate the effectiveness of the proposed approaches.
\end{abstract}

\section{INTRODUCTION}
The formation control problem for multi-agent systems has been extensively studied in recent years due to its broad applicability in coordinated autonomous systems \cite{Ahn2020book}. In particular, formation control methods play a key role in enabling cooperative behaviors in unmanned aerial vehicle swarms, satellite formations, and multi-robot platforms, among many other engineering applications. Based on the inter-agent constraints used to define the desired formation, formation control strategies can be classified into displacement-based \cite{Fax2004TAC}, distance-based \cite{krick2008CDC}, ratio-of-distances-based \cite{Cao2020TAC}, bearing-based \cite{Zhao2015tac,THM2021auto}, and angle-based approaches \cite{JING2019automatica,Chen2021TAC}.

Compared with other formation control approaches, angle-based formation control has attracted increasing attention due to the fact that local angle measurements are invariant under translation, rotation, and scaling maneuvers. The pioneering work on angle-based formation control was presented in \cite{eren2003cdc}, while the theory of angle rigidity was later systematically developed in \cite{JING2019automatica}. However, these results only guarantee local stability. To address global stability, \cite{CHEN2022auto} introduced the notion of angularity together with a corresponding triangular angle rigidity theory. Subsequently, a number of works have investigated localization and formation control problems under triangular angle constraints \cite{CHEN2022auto,chenTAC2023,chen2022ieeecaa,PENG2025auto}. For example, a distributed control law for single-integrator agent models was proposed in \cite{chenTAC2023}. For double-integrator agents in a leader–follower formation, \cite{chen2022ieeecaa} proposed a PD-type control law that requires relative velocity measurements in the case where leaders move with a common constant velocity. In addition, \cite{chen2022ieeecaa} addressed maneuvering leaders with translational, rotational, and scaling motions using a control law that requires additional communication among agents. Angle-based formation control in two-dimensional space with disturbance rejection was considered in \cite{PENG2025auto}, where followers are modeled as double-integrator systems subject to matched disturbances represented by trigonometric polynomials. The internal model approach was then employed to design the control law, achieving formation tracking under constant leader velocities.

In practice, planar agents such as surface vessels operating on a horizontal plane have three degrees of freedom (surge, sway, and yaw) when heave, roll, and pitch dynamics are neglected. With only two independent control inputs, these systems are underactuated since the number of inputs is less than the configuration space dimension \cite{book_vessel}. Formation control for this class of agents was earlier studied in \cite{Lu2020IJC}. In that work, a simple star-graph structure was adopted, where the leader serves as the central node and each follower is directly connected to the leader. However, as the number of agents increases, this approach becomes impractical. Thus, formation control laws in which each agent’s position is constrained by relative measurements with its neighbors are often preferred. 

To the best of our knowledge, angle-based formation control for underactuated agents remains unexplored, which motivates the present work. In this paper, two distributed control algorithms were designed to steer a group of underactuated planar agents subject to disturbances toward a desired formation defined by angle constraints among triples of agents’ hand points and the leaders’ positions, where the leaders are assumed to move with a common constant velocity. Unlike \cite{PENG2025auto}, our proposed methods are based on the variable structure control framework. Moreover, compared with existing works \cite{chen2022ieeecaa,PENG2025auto}, this paper presents the first angle-based control law for dynamical agents with relative degree two, subject to additive disturbances, that does not require relative velocity measurements.

The remainder of this paper is organized as follows. Section~\ref{sec:background} introduces the theoretical background and formulates the problem. The main results are presented in Section~\ref{sec:main}. Simulation results for a group of surface vessels are provided in Section~\ref{sec:sim}. Finally, Section~\ref{sec:conclude} concludes the paper.
\section{Preliminaries And Problem Statement}\label{sec:background}
\textbf{Notation:} Throughout this paper, $\mathbb{R}$ denotes the set of real numbers. For a vector $\mathbf{x} \in \mathbb{R}^n$, the Euclidean norm of a vector $\mathbf{x} \in \mathbb{R}^n$ is defined as $\|\mathbf{x}\| = \sqrt{\mathbf{x}^\top \mathbf{x}}$, while $\|\mathbf{x}\|_1$ and $\|\mathbf{x}\|_{\infty}$ represent its $1$-norm and infinity norm, respectively. The operators $\mathrm{null}(\cdot)$ and $\mathrm{span}(\cdot)$ refer to the null space and the span of a matrix or a set of vectors, respectively. The symbol $\otimes$ denotes the Kronecker product, and $\mathrm{sgn}(\cdot)$ denotes the signum function. The symbol $\mathcal{R}(\cdot)$ denotes the rotation matrix in 2D. Finally, $\mathbf{1}$, $\mathbf{0}$, and $\mathbf{I}$ denote the all-ones vector, the zero vector, and the identity matrix of appropriate dimensions, respectively.
\begin{figure}[!t]
\centering
\includegraphics[width=0.7\columnwidth]{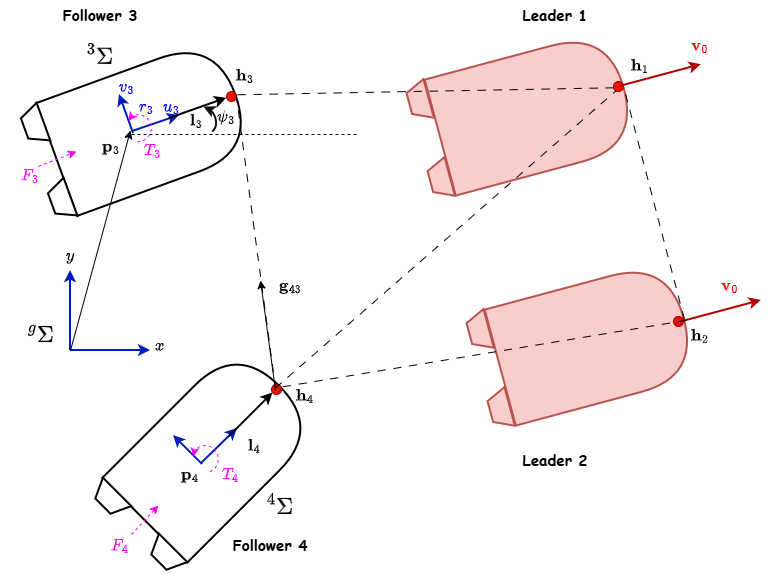}
\caption{Coordinate systems and physical quantities of a group of underactuated planar vehicles (e.g., underactuated surface vessels), including two leaders and two followers.}
\label{fig:Robot}
\end{figure}
\subsection{Mathematical Model of Underactuated Planar Agents}
Considering an underactuated planar agent, we assign to each agent $i$ a local coordinate frame ${}^i\Sigma$, whose origin is located at its center of mass or pivot point, as shown in Fig.~\ref{fig:Robot}. Let $\mathbf{p}_i=[x_i,y_i]^\top\in\mathbb{R}^2$ and $\psi_i\in\mathbb{R}$ denote the position vector and yaw angle of the agent $i$ with respect to the global coordinate ${}^g\Sigma$, respectively. The velocity of the agent $i$ expressed in its local frame and its yaw rate are denoted by ${}^i\mathbf{v}_{i}=[u_i,v_i]^\top\in\mathbb{R}^2$ and $r_i\in\mathbb{R}$, respectively. The kinematics and general nonlinear dynamics of each agent $i$, expressed in ${}^i\Sigma$, are given by
\begin{subequations}\label{eq:planar_agent}
    \begin{align}
        &\begin{aligned}
            \begin{bmatrix}\dot{\mathbf{p}}_{i}\\ \dot{\psi}_{i} \end{bmatrix} &=
            \begin{bmatrix} \cos(\psi_{i}) & -\sin(\psi_{i}) & 0 \\ \sin(\psi_{i}) & \cos(\psi_{i}) & 0 \\ 0 & 0 & 1 \end{bmatrix} 
            \begin{bmatrix} u_{i} \\ v_i \\ r_i \end{bmatrix} = 
            \begin{bmatrix} \mathcal{R}_i & \mathbf{0} \\ \mathbf{0} & 1 \end{bmatrix} \begin{bmatrix} {}^i\mathbf{v}_{i} \\ r_i \end{bmatrix},
        \end{aligned} \label{eq:planar_agent_a} \\
        &\begin{aligned}
            \begin{cases}
                \dot{u}_{i} = g_{ui}(u_i, v_i, r_{i}) + \delta_{ui}(t) + b_{ui}F_{i}, \\
                \dot{v}_{i} = g_{v i}(u_{i}, v_i, r_{i}) + \delta_{v i}(t) + b_{v i}T_{i}, \\
                \dot{r}_{i} = g_{ri}(u_i, v_i, r_{i}) + \delta_{ri}(t) + b_{ri}T_{i}.
            \end{cases}
        \end{aligned} \label{eq:planar_agent_b}
    \end{align}
\end{subequations}

Here, $F_i$ and $T_i$ represent the longitudinal control force and yaw moment of the $i$-th vessel, respectively, while $b_{ui}$, $b_{vi}$, and $b_{ri}$ denote constant actuator gain coefficients. The functions $g_{ui}$, $g_{vi}$, and $g_{ri}$ are assumed to be locally Lipschitz continuous, while $\delta_{ui}$, $\delta_{vi}$, and $\delta_{ri}$ denote bounded disturbances. In addition, $\mathcal{R}_i$ denotes $\mathcal{R}(\psi_i)$, representing the rotation matrix from the coordinate frame ${}^i\Sigma$ to ${}^g\Sigma$. The lateral and yaw dynamics, corresponding to the second and third equations in \eqref{eq:planar_agent_b}, cannot be controlled independently, as both are influenced by the single control input $T_i$. In addition, the general nonlinear functions $g_{ui}$, $g_{vi}$, and $g_{ri}$ are sufficient to describe various classes of planar underactuated systems. For example, in this paper, we are particularly interested in the model of an underactuated surface vessel \cite{book_vessel}, which is also considered in the simulation study
\begin{equation}\label{eq:vessel}
\footnotesize
\begin{aligned} 
&\underbrace{\begin{bmatrix} m_{11,i} & 0 & 0\\ 0 & m_{22,i} & m_{23,i}\\ 0 & m_{23,i} & m_{33,i}\end{bmatrix}}_{\mathcal{M}_i}\begin{bmatrix}\dot{u}_{i}\\\dot{v}_{i}\\\dot{r}_{i}\end{bmatrix}+\begin{bmatrix} -m_{22,i}v_i r_{i}-m_{23,i}r_{i}^{2}\\ m_{11,i}u_{i}r_{i}\\ (m_{22,i}-m_{11,i})u_{i}v_{i}+m_{23,i}u_{i}r_{i}\end{bmatrix}\\ &+\begin{bmatrix} d_{11,i} & 0 & 0\\ 0 & d_{22,i} & d_{23,i}\\ 0 & d_{23,i} & d_{33,i}\end{bmatrix}\begin{bmatrix} u_{i}\\v_{i}\\ r_{i}\end{bmatrix}=\begin{bmatrix} F_{i}\\ 0\\ T_{i}\end{bmatrix}+\begin{bmatrix} w_{ui}\\ w_{v_i}\\ w_{ri}\end{bmatrix},
\end{aligned}
\end{equation}
where $m(\cdot)$ represents the inertia parameters of the vessel, and $d(\cdot)$ denotes the hydrodynamic damping coefficients. The inputs $F_i$ and $T_i$ correspond to the surge force and yaw moment, respectively. The term $w(\cdot)$ represents both bounded model uncertainties and external disturbances. The above equation can be transformed into the form of (\ref{eq:planar_agent}) by multiplying both sides by $\mathcal{M}_i^{-1}$. 

Finally, for each agent $i$, a hand point $\mathbf{h}_i$ is selected along its local $x$-body axis, measured with respect to its center of mass $\mathbf{p}_i$, such that the corresponding offset vector is $\mathbf{l}_i=\mathbf{h}_i-\mathbf{p}_i$ with $\|\mathbf{l}_i\|=l_i$. Let ${}^i\mathbf{v}_{h_i}\in\mathbb{R}^2$ and ${}^i\mathbf{a}_{h_i}\in\mathbb{R}^2$ denote the velocity and acceleration of the hand point $\mathbf{h}_i$, respectively, expressed in ${}^i\Sigma$. Then, the following relationships hold \cite{quoc2023TCNS}
\begin{equation*} {}^{i}\mathbf{v}_{hi}=\mathcal{R}^{\top}_i\dot{\mathbf{h}}_{i},\ \  {}^{i}\mathbf{a}_{hi}=\mathcal{R}^{\top}_i\ddot{\mathbf{h}}_{i},\quad \forall i\in \mathcal{V}.\end{equation*}

From the relationship $\mathbf{h}_i=\mathbf{p}_i+\mathbf{l}_i$, we have
\begin{align*}
{}^i\mathbf{v}_{h i} & =\left[\begin{array}{c} u_i \\v_i+r_i l_i\end{array}\right], \\
{}^i\mathbf{a}_{h i} & =\left[\begin{array}{c}\dot{u}_i-r_i^2 l_i \\\dot{v}_i+\dot{r}_i l_i\end{array}\right] \\ & =\left[\begin{array}{c} g_{ui}+\delta_{u i}-r_i^2 l_i+b_{u i} F_i \\ g_{vi}+l_i g_{ri}+\delta_{v i}+l_i \delta_{r i}+\left(b_{v i}+b_{r i} l_i\right) T_{i}\end{array}\right].
\end{align*}

\subsection{Triangular Angle Rigidity and Problem Formulation}
The interactions among agents in the 2D plane are modeled by a communication graph $\mathcal{G} = (\mathcal{V}, \mathcal{E})$, where $\mathcal{V} = \{1, \dots, n\}$ is the vertex set and $\mathcal{E} \subseteq \mathcal{V} \times \mathcal{V}$ is the set of $m$ edges. The graph $\mathcal{G}$ is undirected if $(i, j) \in \mathcal{E} \iff (j, i) \in \mathcal{E}$, and the neighborhood of agent $i$ is denoted by $\mathcal{N}_i = \{j \in \mathcal{V} : (i, j) \in \mathcal{E}\}$. Among the \(n\) agents, the first \(n_l \geq 2\) agents are designated as leaders, while the remaining \(n_f = n - n_l\) agents are designated as followers. For any edge $(i,j)\in\mathcal{E}$, the relative displacement between the associated hand points is defined as $\mathbf{h}_{ij}=\mathbf{h}_j-\mathbf{h}_i\in\mathbb{R}^2$. A formation is defined by the pair $(\mathcal{G}, \mathbf{h})$, where 
$\mathbf{h} = [\mathbf{h}_1^{\top}, \dots, \mathbf{h}_n^{\top}]^{\top} \in \mathbb{R}^{2n}$ 
represents the collective configuration vector. This vector can be partitioned as 
$\mathbf{h} = [(\mathbf{h}^L)^{\top}, (\mathbf{h}^F)^{\top}]^{\top}$, where 
$\mathbf{h}^L=[\mathbf{h}_1^\top,\dots,\mathbf{h}_{n_L}^\top]^\top$ and 
$\mathbf{h}^F=[\mathbf{h}_{n_L+1}^\top,\dots,\mathbf{h}_{n}^\top]^\top$ represent 
the leader and follower configurations, respectively.

For three distinct agents $k, i$, and $j$, the interior angle $\alpha_{kij} \in [0, 2\pi)$ at vertex $i$ is defined as \cite{CHEN2022auto}
\begin{equation} \label{eq:angle_def}
\alpha_{kij} = \begin{cases} 
\arccos \left( \mathbf{g}_{ij}^\top \mathbf{g}_{ik} \right),  \ \ \text{if } \mathbf{g}_{ij}^\top \mathcal{R}(\pi/2)\mathbf{g}_{ik} \ge 0 \\ 
2\pi - \arccos \left( \mathbf{g}_{ij}^\top \mathbf{g}_{ik} \right),  \text{otherwise} 
\end{cases}
\end{equation}
where $\mathbf{g}_{ij} = \frac{\mathbf{h}_j - \mathbf{h}_i}{\|\mathbf{h}_j - \mathbf{h}_i\|}$ denotes the unit bearing vector from the hand point of agent $i$ to the hand point of agent $j$. Based on this definition, $\alpha_{kij}$ represents the angle formed by rotating $\mathbf{g}_{ik}$ counterclockwise to $\mathbf{g}_{ij}$.

The angle set is defined as $\mathcal{A} = \{(k,i,j) \mid k,i,j \in \mathcal{V},\; k \neq i,\; i \neq j,\; j \neq k \}$. Furthermore, an \emph{angularity} is characterized by the framework $(\mathcal{V}, \mathcal{A}, \mathbf{h})$, where each ordered triplet $(k,i,j) \in \mathcal{A} \subseteq \mathcal{V}^3$ specifies an angular constraint $\alpha_{kij}$. In particular, if the angle set $\mathcal{A}$ satisfies the cyclic symmetry property such that $(k,i,j) \in \mathcal{A} \implies (i,j,k), (j,k,i) \in \mathcal{A}$, then $\mathcal{A}$ is referred to as a \emph{triangular angle set}, and the associated framework $(\mathcal{V}, \mathcal{A}, \mathbf{h})$ is called a \emph{triangular angularity}.

We consider a triangular angularity composed of $q$ triangles among $n$ agents. For a nondegenerate triangle $\triangle ijk$ (where the three vertices are not collinear), the positions $\mathbf{h}_i, \mathbf{h}_j, \mathbf{h}_k$ satisfy an angle-induced linear constraint \cite{CHEN2022auto}

{\small
\begin{equation}
\begin{split}
&f_i^{\Delta ijk}(\alpha, \mathbf{h}) = \mathbf{A}_i^{\Delta ijk}(\alpha)\mathbf{h}_i + \mathbf{A}_j^{\Delta ijk}(\alpha)\mathbf{h}_j + \mathbf{A}_k^{\Delta ijk}(\alpha)\mathbf{h}_k \\
& = \sin \alpha_{jki}(\mathbf{h}_i - \mathbf{h}_k) - \sin \alpha_{ijk} \mathcal{R}^\top (\alpha_{kij})(\mathbf{h}_i - \mathbf{h}_j) = \mathbf{0}_2,
\end{split}
\end{equation}}
\\
where $\alpha$ denotes the set of interior angles associated with $\triangle ijk$. The coefficient matrices $\mathbf{A}_i^{\triangle ijk}, \mathbf{A}_j^{\triangle ijk}, \mathbf{A}_k^{\triangle ijk} \in \mathbb{R}^{2 \times 2}$ are determined by $\alpha$ as follows
\begin{align*}
\mathbf{A}_i^{\triangle ijk}(\alpha) &= \sin \alpha_{jki} \mathbf{I}_2 - \sin \alpha_{ijk} \mathcal{R}^\top(\alpha_{kij}), \\
\mathbf{A}_j^{\triangle ijk}(\alpha) &= \sin \alpha_{ijk} \mathcal{R}^\top(\alpha_{kij}), \\
\mathbf{A}_k^{\triangle ijk}(\alpha) &= -\sin \alpha_{jki} \mathbf{I}_2.
\end{align*}

Following an analogous derivation, all angle-induced linear equations corresponding to a triangular angularity $\mathbb{A}(\mathcal{V},\mathcal{A},\mathbf{h})$, with a generic configuration $\mathbf{h}$\footnote{A position vector $\mathbf{h}$ is called generic if all of its components are algebraically independent~\cite{Chen2021TAC}.}, can be compactly represented as $\mathbf{R}_{\mathcal{A}}(\alpha)\mathbf{h}= \mathbf{0}$,
where $\mathbf{R}_{\mathcal{A}}(\alpha)\in\mathbb{R}^{2q\times2n}$ is referred to as the \emph{triangular angle rigidity matrix} given by~\cite{CHEN2022auto}
\begin{equation}
\small
\renewcommand{\arraystretch}{1}
\bordermatrix{
    & \dots & i & \dots & j & \dots & k & \dots \cr
    1\text{st}\Delta & \dots & \dots & \dots & \dots & \dots & \dots & \dots \cr
    \dots & \dots & \dots & \dots & \dots & \dots & \dots & \dots \cr
    \Delta ijk & \mathbf{0} & \mathbf{A}_i^{\Delta ijk} & \mathbf{0} & \mathbf{A}_j^{\Delta ijk} & \mathbf{0} & \mathbf{A}_k^{\Delta ijk} & \mathbf{0} \cr
    \dots & \dots & \dots & \dots & \dots & \dots & \dots & \dots \cr
    q\text{th}\Delta & \dots & \dots & \dots & \dots & \dots & \dots & \dots \cr
}.
\end{equation}
Here, each row block of $\mathbf{R}_{\mathcal{A}}(\alpha)\in\mathbb{R}^{2q\times2n}$ is associated with a triangle in the triangular angle set $\mathcal{A}$, and each column block corresponds to a vertex in the vertex set $\mathcal{V}$. The matrix $\mathbf{R}_{\mathcal{A}}(\alpha)$ has a maximum rank of $2n-4$. Moreover, \cite{chenTAC2023} shows that $\mathrm{span}\{\mathbf{1}_n\otimes \mathbf{I}_2,\bar{\mathbf{J}}_n(\mathbf{h}-\mathbf{1}_n\otimes\bar{\mathbf{h}}),\mathbf{h}-\mathbf{1}_n\otimes\bar{\mathbf{h}}\}\subseteq \mathrm{null}(\mathbf{R}_{\mathcal{A}}(\alpha))$, where $\bar{\mathbf{J}}_n=\mathbf{I}_n\otimes \mathcal{R}(\pi/2)$ and $\bar{\mathbf{h}}=\frac{1}{n}\sum_{i=1}^n\mathbf{h}_i$ denotes the centroid of the configuration.
\begin{define} \cite{CHEN2022auto}
A framework $\mathbb{A}(\mathcal{V},\mathcal{A},\mathbf{h})$ with generic $\mathbf{h}$ is called \emph{triangularly angle rigid} if and only if $\operatorname{rank}\!\left(\mathbf{R}_{\mathcal{A}}(\alpha)\right)=2n-4$. 
\end{define}

Let $\mathbf{h}^*=[(\mathbf{h}_1^*)^\top,\dots,(\mathbf{h}_n^*)^\top]^\top$ be the desired formation configuration. The following assumptions are imposed throughout the paper.
\begin{assume}\label{assume:desired_formation}
The interactions among the agents are characterized by a leader-follower graph $\mathcal{G}$, where the edges connecting followers to leaders are directed, whereas the edges among followers are undirected. Additionally, the angularity $\mathbb{A}(\mathcal{V}, \mathcal{A}, \mathbf{h}^*)$ is triangularly angle rigid. The angle constraints corresponding to the target configuration $\mathbf{h}^*$ 
are denoted by the vector $\alpha^* = [\dots, \alpha_{ijk}^*, \dots]^\top \in \mathbb{R}^{2q}$.
\end{assume}

\begin{assume}\label{assume:leader}
The leaders are initially located at their desired positions, i.e., $\mathbf{h}_i=\mathbf{h}_i^*$ for all $i=1,\dots,n_l$. Moreover, the leaders move at a common constant reference velocity $\mathbf{v}_0\in\mathbb{R}^2$, namely, $\mathbf{v}^L=\dot{\mathbf{h}}^L=\mathbf{1}_{n_l}\otimes\mathbf{v}_0$.
\end{assume}

Next, we introduce a new matrix
\begin{equation}\label{eq:lapcacian}
\begin{aligned}
\mathbf{L}(\alpha^*) =  \mathbf{R}_{\mathcal{A}}^\top(\alpha^*) \mathbf{R}_{\mathcal{A}} (\alpha^*) 
=
\begin{bmatrix}
\mathbf{L}_{ll}(\alpha^*) & \mathbf{L}_{fl}^\top(\alpha^*) \\
\mathbf{L}_{fl}(\alpha^*) & \mathbf{L}_{ff}(\alpha^*)    
\end{bmatrix},
\end{aligned}
\end{equation}
where $\mathbf{L}_{ll}(\alpha^*) \in \mathbb{R}^{2n_l\times2n_l}, \mathbf{L}_{lf} (\alpha^*)=\mathbf{L}_{fl}^\top (\alpha^*) \in \mathbb{R}^{2n_l\times2n_f},$ and $\mathbf{L}_{ff}(\alpha^*) \in \mathbb{R}^{2n_f\times2n_f}$. 
\begin{lemma}\label{lmm:desired_formation}\cite{chenTAC2023}
Given that $\mathbb{A}(\mathcal{V}, \mathcal{A}, \mathbf{h}^*)$ is triangularly angle rigid, the matrix $\mathbf{L}_{ff}(\alpha^*)$ is positive definite. Moreover, the desired positions of the followers can be determined from the desired angles $\alpha^*$ and the leaders’ positions as $\mathbf{h}^{F*}(t) = -\mathbf{L}_{ff}^{-1}(\alpha^*) \mathbf{L}_{fl}(\alpha^*) \mathbf{h}^L(t)$.
\end{lemma}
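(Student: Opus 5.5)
The plan is to argue through the null space of $\mathbf{R}_{\mathcal{A}}(\alpha^*)$, using $\mathbf{L}(\alpha^*)=\mathbf{R}_{\mathcal{A}}^\top\mathbf{R}_{\mathcal{A}}\succeq 0$. The first observation is that $\mathbf{L}_{ff}(\alpha^*)$ is a principal submatrix of a positive semidefinite matrix, so it is positive semidefinite. Hence positive definiteness is equivalent to the following implication: if $\mathbf{x}^F\in\mathbb{R}^{2n_f}$ and $\mathbf{x}=[\mathbf{0}^\top,(\mathbf{x}^F)^\top]^\top$, then $\mathbf{x}^\top\mathbf{L}\mathbf{x}=\|\mathbf{R}_{\mathcal{A}}\mathbf{x}\|^2=0$ forces $\mathbf{x}^F=\mathbf{0}$.

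To establish that implication, I will use triangular angle rigidity. The rank is $2n-4$, so $\dim\mathrm{null}(\mathbf{R}_{\mathcal{A}}(\alpha^*))=4$. The span listed before the definition, generated by $\mathbf{1}_n\otimes\mathbf{I}_2$, $\bar{\mathbf{J}}_n(\mathbf{h}^*-\mathbf{1}_n\otimes\bar{\mathbf{h}}^*)$ and $\mathbf{h}^*-\mathbf{1}_n\otimes\bar{\mathbf{h}}^*$, is contained in the null space. This span is 4-dimensional: the scaling and rotation vectors are orthogonal to each other and to the translations, and they are nonzero because the configuration is nondegenerate. Consequently the null space equals this span.

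Every null vector therefore has per-agent components of the form $\mathbf{x}_i=\mathbf{c}+(a\mathbf{I}_2+b\,\mathcal{R}(\pi/2))(\mathbf{h}_i^*-\bar{\mathbf{h}}^*)$, with $\mathbf{c}\in\mathbb{R}^2$ and $a,b\in\mathbb{R}$. The matrix $a\mathbf{I}_2+b\mathcal{R}(\pi/2)$ is a scaled rotation, so it is invertible unless $a=b=0$. Vanishing on two leaders $i\neq j$ with $\mathbf{h}_i^*\neq\mathbf{h}_j^*$ (here $n_l\ge2$ is needed) gives $(a\mathbf{I}+b\mathcal{R}(\pi/2))(\mathbf{h}_i^*-\mathbf{h}_j^*)=\mathbf{0}$. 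This forces $a=b=0$, and then $\mathbf{c}=\mathbf{0}$, so $\mathbf{x}=\mathbf{0}$ and $\mathbf{L}_{ff}\succ0$.

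For the desired positions, note that the desired configuration satisfies $\mathbf{R}_{\mathcal{A}}(\alpha^*)\mathbf{h}^*(t)=\mathbf{0}$ at every time, since the angles are invariant under the common translation of Assumption~\ref{assume:leader}. Hence $\mathbf{L}(\alpha^*)\mathbf{h}^*(t)=\mathbf{0}$. Its lower block row reads $\mathbf{L}_{fl}\mathbf{h}^L+\mathbf{L}_{ff}\mathbf{h}^{F*}=\mathbf{0}$, and invertibility of $\mathbf{L}_{ff}$ gives the stated formula. The same computation shows uniqueness: any follower configuration meeting all angle constraints with the given leaders lies in this affine solution set, which is a single point.

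The main obstacle is the dimension count, namely justifying that the null space is exactly the 4-dimensional similarity span. That requires genericity, so that the rank is achieved, together with distinct leader positions. I would handle it by appealing directly to the rank condition in the definition and to the listed null-space inclusion.
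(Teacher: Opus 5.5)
Your proposal is correct. The paper gives no proof of its own and simply imports the lemma from \cite{chenTAC2023}; your argument is the standard one behind that result. It proceeds by positive semidefiniteness of the principal block $\mathbf{L}_{ff}(\alpha^*)$, then the rank-$(2n-4)$ condition identifying $\mathrm{null}(\mathbf{R}_{\mathcal{A}}(\alpha^*))$ with the 4-dimensional similarity span. Two leaders at distinct positions (already guaranteed by genericity) annihilate that span, and the formula follows by reading off the follower block row of $\mathbf{L}(\alpha^*)\mathbf{h}^*(t)=\mathbf{0}$ under Assumption~\ref{assume:leader}.
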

\begin{corollary}
Under Assumptions~\ref{assume:desired_formation} and~\ref{assume:leader}, the desired velocities of the followers satisfy $$\mathbf{v}^{F*}_h=\dot{\mathbf{h}}^{F*}=-\mathbf{L}_{ff}^{-1}(\alpha^*) \mathbf{L}_{fl}(\alpha^*)\dot{\mathbf{h}}^L=\mathbf{1}_{n_f}\otimes \mathbf{v}_0.$$    
\end{corollary}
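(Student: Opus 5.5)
The plan is to differentiate the expression of Lemma~\ref{lmm:desired_formation} and then use the translational invariance of the angle constraints to simplify the result.

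\textbf{Step 1 (differentiation).} Since $\alpha^*$ is a constant vector of desired angles, the matrices $\mathbf{L}_{ff}(\alpha^*)$ and $\mathbf{L}_{fl}(\alpha^*)$ are time-invariant. By Lemma~\ref{lmm:desired_formation}, $\mathbf{L}_{ff}(\alpha^*)$ is invertible. Differentiating $\mathbf{h}^{F*}(t)=-\mathbf{L}_{ff}^{-1}(\alpha^*)\mathbf{L}_{fl}(\alpha^*)\mathbf{h}^L(t)$ then gives the first equality directly. Substituting Assumption~\ref{assume:leader}, namely $\dot{\mathbf{h}}^L=\mathbf{1}_{n_l}\otimes\mathbf{v}_0$, yields
\begin{equation*}
\mathbf{v}_h^{F*}=-\mathbf{L}_{ff}^{-1}(\alpha^*)\mathbf{L}_{fl}(\alpha^*)(\mathbf{1}_{n_l}\otimes\mathbf{v}_0).
\end{equation*}

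\textbf{Step 2 (translational null space).} It remains to show that this equals $\mathbf{1}_{n_f}\otimes\mathbf{v}_0$. The key fact is the inclusion $\mathrm{span}\{\mathbf{1}_n\otimes\mathbf{I}_2\}\subseteq\mathrm{null}(\mathbf{R}_{\mathcal{A}}(\alpha^*))$ quoted from \cite{chenTAC2023}. It can also be checked directly: each block row satisfies $\mathbf{A}_i^{\triangle ijk}+\mathbf{A}_j^{\triangle ijk}+\mathbf{A}_k^{\triangle ijk}=\mathbf{0}$, since the $\sin\alpha_{jki}\mathbf{I}_2$ terms cancel and the $\sin\alpha_{ijk}\mathcal{R}^\top(\alpha_{kij})$ terms cancel. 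Consequently $\mathbf{L}(\alpha^*)(\mathbf{1}_n\otimes\mathbf{v}_0)=\mathbf{R}_{\mathcal{A}}^\top\mathbf{R}_{\mathcal{A}}(\mathbf{1}_n\otimes\mathbf{v}_0)=\mathbf{0}$.

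\textbf{Step 3 (reading off the follower block).} Partition $\mathbf{1}_n\otimes\mathbf{v}_0=[(\mathbf{1}_{n_l}\otimes\mathbf{v}_0)^\top,(\mathbf{1}_{n_f}\otimes\mathbf{v}_0)^\top]^\top$ and take the follower block row of this identity:
\begin{equation*}
\mathbf{L}_{fl}(\alpha^*)(\mathbf{1}_{n_l}\otimes\mathbf{v}_0)+\mathbf{L}_{ff}(\alpha^*)(\mathbf{1}_{n_f}\otimes\mathbf{v}_0)=\mathbf{0}.
\end{equation*}
Multiplying on the left by $\mathbf{L}_{ff}^{-1}(\alpha^*)$ gives $-\mathbf{L}_{ff}^{-1}\mathbf{L}_{fl}(\mathbf{1}_{n_l}\otimes\mathbf{v}_0)=\mathbf{1}_{n_f}\otimes\mathbf{v}_0$, which is the claim.

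There is no real obstacle here. The only point requiring care is that the null-space inclusion holds for the constant matrix evaluated at $\alpha^*$, independent of the current configuration. This is immediate because the block-row identity in Step 2 depends only on the angle parameters, not on $\mathbf{h}$.
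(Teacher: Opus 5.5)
Your proof is correct and follows essentially the route the paper intends: differentiate the expression in Lemma~\ref{lmm:desired_formation} (with $\alpha^*$ constant), substitute $\dot{\mathbf{h}}^L=\mathbf{1}_{n_l}\otimes\mathbf{v}_0$, and use the translational null-space property $\mathbf{L}(\alpha^*)(\mathbf{1}_n\otimes\mathbf{v}_0)=\mathbf{0}$ to read off the follower block row. Your direct check that $\mathbf{A}_i^{\triangle ijk}+\mathbf{A}_j^{\triangle ijk}+\mathbf{A}_k^{\triangle ijk}=\mathbf{0}$ is a self-contained substitute for the cited null-space inclusion.
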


The formation tracking control problem considered in this paper is formally stated as follows.
\begin{problem}\label{prob}
Under Assumptions~\ref{assume:desired_formation} and~\ref{assume:leader}, design the tracking control inputs $F_i$ and $T_i$ for each follower agent such that $\mathbf{h}(t)\rightarrow\mathbf{h}^*(t)$ as $t\rightarrow\infty$.
\end{problem}

Finally, the following lemma is useful for the subsequent analysis.
\begin{lemma}\cite{Khalil2002nonlinear}\label{lm:vanish_input}
Consider the system $\dot{\mathbf{x}}(t)=\mathbf{M}\mathbf{x}(t)+\boldsymbol{\phi}(t)$, where $\mathbf{x}(t)\in\mathbb{R}^{d}$ is the state vector, $\mathbf{M}\in\mathbb{R}^{d\times d}$ is a Hurwitz matrix, and $\boldsymbol{\phi}(t)$ is a bounded vanishing input, i.e., $\lim_{t\to\infty}\boldsymbol{\phi}(t)=\mathbf{0}_d$. Then, $\lim_{t\to\infty}\mathbf{x}(t)=\mathbf{0}_d$. Furthermore, if there exist constants $C,\alpha>0$ such that $\|\boldsymbol{\phi}(t)\|\leq Ce^{-\alpha t}$, then $\mathbf{x}(t)$ converges to $\mathbf{0}_d$ exponentially fast.
\end{lemma}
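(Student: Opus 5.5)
The statement is the standard converging-input result for a linear time-invariant system with a Hurwitz matrix $\mathbf{M}$; the plan is to work directly with the variation-of-constants formula. The solution is
\begin{equation*}
\mathbf{x}(t)=e^{\mathbf{M}t}\mathbf{x}(0)+\int_0^t e^{\mathbf{M}(t-\tau)}\boldsymbol{\phi}(\tau)\,d\tau .
\end{equation*}
Since $\mathbf{M}$ is Hurwitz, there are constants $k\ge 1$ and $\lambda>0$ with $\|e^{\mathbf{M}t}\|\le k e^{-\lambda t}$ for all $t\ge 0$. This bound follows from the Jordan form, or from a quadratic Lyapunov function $V=\mathbf{x}^\top\mathbf{P}\mathbf{x}$ with $\mathbf{M}^\top\mathbf{P}+\mathbf{P}\mathbf{M}=-\mathbf{I}$. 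The homogeneous term therefore decays exponentially, and everything reduces to controlling the convolution term.

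\textbf{Asymptotic part.} Let $\bar\phi=\sup_{t\ge0}\|\boldsymbol{\phi}(t)\|<\infty$ and fix $\varepsilon>0$. Choose $T$ such that $\|\boldsymbol{\phi}(\tau)\|\le\varepsilon$ for all $\tau\ge T$. For $t>T$, split the integral at $T$.
\begin{itemize}
\item On $[0,T]$, the integral is bounded by $k\bar\phi\int_0^T e^{-\lambda(t-\tau)}d\tau\le \frac{k\bar\phi}{\lambda}e^{-\lambda(t-T)}$. This tends to $0$ as $t\to\infty$.
\item On $[T,t]$, the integral is bounded by $k\varepsilon\int_T^t e^{-\lambda(t-\tau)}d\tau\le k\varepsilon/\lambda$.
\end{itemize}
Hence $\limsup_{t\to\infty}\|\mathbf{x}(t)\|\le k\varepsilon/\lambda$. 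Since $\varepsilon$ is arbitrary, $\mathbf{x}(t)\to\mathbf{0}_d$. Boundedness of $\boldsymbol{\phi}$ is needed only to control the early portion of the integral on $[0,T]$.

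\textbf{Exponential part.} Assume $\|\boldsymbol{\phi}(t)\|\le Ce^{-\alpha t}$. The convolution term is then bounded by $kC\int_0^t e^{-\lambda(t-\tau)}e^{-\alpha\tau}d\tau$.
\begin{itemize}
\item If $\alpha\neq\lambda$, this integral equals $\frac{kC}{|\lambda-\alpha|}\,|e^{-\alpha t}-e^{-\lambda t}|$.
\item If $\alpha=\lambda$, it equals $kCte^{-\lambda t}$, which is bounded by $c\,e^{-(\lambda/2)t}$ for a suitable constant $c$.
\end{itemize}
Alternatively, one can avoid the case split by replacing $\lambda$ with any $\gamma<\lambda$, which the bound $\|e^{\mathbf{M}t}\|\le ke^{-\lambda t}$ still permits. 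In every case $\|\mathbf{x}(t)\|\le c_1e^{-\beta t}$ with $\beta=\min(\lambda,\alpha)/2>0$, so the convergence is exponential.

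The only point needing care is the resonant case $\alpha=\lambda$ in the exponential estimate. It is handled by sacrificing an arbitrarily small part of the decay rate, so the argument has no genuine obstacle.
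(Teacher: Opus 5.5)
Your proof is correct. The paper gives no proof of this lemma; it simply cites Khalil, so the only comparison is with the standard textbook argument.

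Your route is the direct one: the bound $\|e^{\mathbf{M}t}\|\le ke^{-\lambda t}$, the variation-of-constants formula, and a split of the convolution at $T$. This is essentially what underlies the cited result. In Khalil's framework the same estimate shows that $\dot{\mathbf{x}}=\mathbf{M}\mathbf{x}+\boldsymbol{\phi}$ is input-to-state stable with asymptotic gain $k/\lambda$. The asymptotic claim then follows from the general fact that an ISS system driven by a vanishing input has a vanishing state.

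The ISS route extends verbatim to nonlinear systems with a globally exponentially stable unforced part. Your direct computation instead gives explicit constants. These are the gain $k/\lambda$ and a decay rate $\min(\lambda,\alpha)/2$. That rate can in fact be taken as any $\beta<\min(\lambda,\alpha)$, or exactly $\min(\lambda,\alpha)$ when $\alpha\neq\lambda$, so halving is only a convenience.

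The one tacit assumption is that $\boldsymbol{\phi}$ is locally integrable, so that the variation-of-constants formula applies. This holds in the paper's use of the lemma, since $\mathbf{s}^F(t)$ is a continuous function of the absolutely continuous Filippov solution.
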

\section{Main Results}\label{sec:main}

\subsection{Angle-Based Formation Tracking Control Law}
For each agent $i$, we define an auxiliary variable in its local coordinate frame ${}^{i}\mathbf s_{i}=[s_{1i},s_{2i}]^\top=\mathcal R^\top(\psi _{i}) {\mathbf s}_{i}$ as 
\begin{equation}
\begin{aligned}
{}^{i}\mathbf s_{i} &= \mathcal{R}^\top_i\bigg[ \dot{\mathbf {h}}_{i}+k_P\mathcal{F}_i(\alpha^*,\mathbf{h}) +k_{I}\int _{0}^{t}\mathcal{F}_i(\alpha^*,\mathbf{h}(\tau))d\tau \bigg], 
\end{aligned}
\end{equation}
where $k_P, k_I>0$ are control gains, and the expansion of $\mathcal{F}_i(\alpha^*,\mathbf{x}(t))$ has the following form
\begin{equation}
\begin{aligned}
\mathcal{F}_i&(\alpha^*,\mathbf{x}(t))\\
=&\sum_{(i, j_1, k_1) \in \bar{\mathcal{A}}} (\mathbf{A}_i^{\Delta i j_1 k_1}(\alpha^*))^\top f_i^{\Delta i j_1 k_1}(\alpha^*, \mathbf{x}(t)) \\
&+ \sum_{(j_2, i, k_2) \in \bar{\mathcal{A}}} (\mathbf{A}_i^{\Delta j_2 i k_2}(\alpha^*))^\top f_i^{\Delta j_2 i k_2}(\alpha^*, \mathbf{x}(t))\\
&+ \sum_{(j_3, k_3, i) \in \bar{\mathcal{A}}} (\mathbf{A}_i^{\Delta j_3 k_3 i}(\alpha^*))^\top f_i^{\Delta j_3 k_3 i}(\alpha^*, \mathbf{x}(t)).\\
\end{aligned}
\end{equation}

Hence, the following yields
\begin{equation}
\begin{aligned} 
{}^{i}\dot{\mathbf{s}}_{i}&=\dot{\mathcal{R}}^\top(\psi_i) {\mathbf{s}}_{i}+\mathcal{R}^\top_i {\dot{\mathbf{s}}}_{i} = r_{i}\begin{bmatrix}s_{2i}\\ -s_{1i}\end{bmatrix}+\mathcal{R}^\top_i {\dot{\mathbf{s}}}_{i} \\ &= r_{i}\begin{bmatrix}s_{2i}\\ -s_{1i}\end{bmatrix}+{}^{i}\mathbf{a}_{hi}\\
&\quad+\mathcal{R}^\top_i \big(k_P\mathcal{F}_i(\alpha^*,\mathbf{v}_h(t))+k_I\mathcal{F}_i(\alpha^*,\mathbf{h}(t))\big),
\end{aligned}
\end{equation}
where $\mathbf{v}_h=\dot{\mathbf{h}}=[(\mathbf{v}_h^F)^\top,(\mathbf{v}^L)^\top]^\top$ denotes the stacked velocity vector of the hand points. The control input for each agent $i$ is designed as follows
\begin{equation}\label{eq:control_velocity}
\begin{aligned} 
&\begin{bmatrix}b_{ui}F_i\\ (b_{vi}l_{i}+b_{ri})T_i \end{bmatrix}= -k{}^{i}\mathbf {s}_{i}-\gamma  {\text{sgn}}({}^{i}\mathbf {s}_{i})- \begin{bmatrix}g_{ui}-r_{i}^{2}l_{i}\\ g_{vi} +l_{i}g_{ri}\end{bmatrix} \\ &\quad\quad\quad\quad-\mathcal{R}^\top_i \big(k_P\mathcal{F}_i(\alpha^*,\mathbf{v}_h(t))+k_I\mathcal{F}_i(\alpha^*,\mathbf{h}(t))\big), 
\end{aligned}
\end{equation}
where $k>0$, and $\gamma > 0$ is a control gain to be designed. It can be seen that the control law requires the local velocity of the hand point, as well as the relative displacements and relative velocities between neighboring agents' hand points. Let $\boldsymbol{\delta}^F=\mathbf{h}^F-\mathbf{h}^{F*}=\mathbf{h}^F+\mathbf{L}_{ff}^{-1}(\alpha^*) \mathbf{L}_{fl}(\alpha^*) \mathbf{h}^L$ denote the tracking error of the followers. The stability of the proposed control law is provided in the following theorem.
\begin{theorem}\label{thm:stability1}
Consider the Problem \ref{prob} under Assumptions \ref{assume:desired_formation} and \ref{assume:leader}. Under the control law~(\ref{eq:control_velocity}) with $\gamma > \|[\delta_{ui},\delta_{vi}+l_i\delta_{ri}]\|_{\infty}$, $\mathbf{h}(t)\to\mathbf{h}^*(t)$ \emph{exponentially fast} as $t\to\infty$. 
\end{theorem}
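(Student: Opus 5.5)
Our plan has two stages: first a sliding-mode argument showing that each ${}^i\mathbf{s}_i$ reaches zero in finite time, and then an analysis of the reduced dynamics on the sliding surface $\mathbf{s}_i=\mathbf{0}$.

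\emph{Stage 1 (reaching phase).} We substitute the control law~(\ref{eq:control_velocity}) into the expression for ${}^{i}\dot{\mathbf{s}}_{i}$ together with the formula for ${}^i\mathbf{a}_{hi}$. The nominal terms $g_{ui}-r_i^2l_i$ and $g_{vi}+l_ig_{ri}$ cancel, and so does the feedforward term $\mathcal{R}_i^\top(k_P\mathcal{F}_i(\alpha^*,\mathbf{v}_h)+k_I\mathcal{F}_i(\alpha^*,\mathbf{h}))$. What remains is
\begin{equation*}
{}^{i}\dot{\mathbf{s}}_{i}= r_i\begin{bmatrix}s_{2i}\\ -s_{1i}\end{bmatrix} -k\,{}^i\mathbf{s}_i-\gamma\,\mathrm{sgn}({}^i\mathbf{s}_i)+\boldsymbol{\delta}_i,
\qquad \boldsymbol{\delta}_i=[\delta_{ui},\,\delta_{vi}+l_i\delta_{ri}]^\top .
\end{equation*}
We take $V_i=\tfrac12\|{}^i\mathbf{s}_i\|^2$. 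The rotational term is skew, so it contributes nothing to $\dot V_i$, and we get
\begin{equation*}
\dot V_i\le -k\|{}^i\mathbf{s}_i\|^2-(\gamma-\|\boldsymbol{\delta}_i\|_\infty)\|{}^i\mathbf{s}_i\|_1 .
\end{equation*}
Since $\|\cdot\|_1\ge\|\cdot\|$, this yields $\dot V_i\le -\epsilon\sqrt{2V_i}$ with $\epsilon=\gamma-\sup_t\|\boldsymbol{\delta}_i\|_\infty>0$. Hence ${}^i\mathbf{s}_i$, and therefore $\mathbf{s}_i$, reaches zero in finite time $T_i$ and stays there, with solutions understood in the Filippov sense. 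The same inequality shows that $\mathbf{s}_i$ is bounded on $[0,T_i]$.

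\emph{Stage 2 (sliding dynamics).} For $t\ge T=\max_i T_i$ we have, for every follower,
\begin{equation*}
\dot{\mathbf{h}}_i=-k_P\mathcal{F}_i(\alpha^*,\mathbf{h})-k_I\int_0^t\mathcal{F}_i\,d\tau .
\end{equation*}
Stacking over followers and noting that $\mathcal{F}_i$ is the $i$-th block of $\mathbf{R}_\mathcal{A}^\top\mathbf{R}_\mathcal{A}\mathbf{h}$, restricted to follower rows, gives $[\mathcal{F}_i]_{i\in F}=\mathbf{L}_{fl}\mathbf{h}^L+\mathbf{L}_{ff}\mathbf{h}^F=\mathbf{L}_{ff}\boldsymbol{\delta}^F$. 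This identification is the point that needs care: one must check that the three sums in $\mathcal{F}_i$ reproduce exactly the follower block rows of $\mathbf{R}_\mathcal{A}^\top\mathbf{R}_\mathcal{A}$, and that the directed follower-to-leader edges mean only follower rows are used. We then define $\boldsymbol{\eta}=\int_0^t\mathbf{L}_{ff}\boldsymbol{\delta}^F d\tau$. By the Corollary, $\dot{\mathbf{h}}^{F*}=\mathbf{1}\otimes\mathbf{v}_0$, so
\begin{equation*}
\dot{\boldsymbol{\delta}}^F=-k_P\mathbf{L}_{ff}\boldsymbol{\delta}^F-k_I\boldsymbol{\eta}-\mathbf{1}\otimes\mathbf{v}_0,
\qquad \dot{\boldsymbol{\eta}}=\mathbf{L}_{ff}\boldsymbol{\delta}^F .
\end{equation*}
The constant term is removed by the shift $\tilde{\boldsymbol{\eta}}=\boldsymbol{\eta}+k_I^{-1}\mathbf{1}\otimes\mathbf{v}_0$; this is exactly where the integral action is needed. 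The result is a linear system $\dot{\mathbf{x}}=\mathbf{M}\mathbf{x}$ with
\begin{equation*}
\mathbf{M}=\begin{bmatrix}-k_P\mathbf{L}_{ff}&-k_I\mathbf{I}\\ \mathbf{L}_{ff}&\mathbf{0}\end{bmatrix}.
\end{equation*}

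\emph{Hurwitz property (main obstacle).} We must show that $\mathbf{M}$ is Hurwitz, using that $\mathbf{L}_{ff}$ is positive definite by Lemma~\ref{lmm:desired_formation}. Diagonalize $\mathbf{L}_{ff}=\mathbf{Q}\Lambda\mathbf{Q}^\top$ with eigenvalues $\lambda_j>0$. The system then decouples into $2\times2$ blocks with characteristic polynomial $s^2+k_P\lambda_j s+k_I\lambda_j$. Both coefficients are positive, so every block is Hurwitz for any $k_P,k_I>0$. As an alternative check, the Lyapunov function $\tfrac12\|\boldsymbol{\delta}^F\|^2\!\cdot\!$(weighted) plus $\tfrac{k_I}{2}\tilde{\boldsymbol{\eta}}^\top\mathbf{L}_{ff}^{-1}\tilde{\boldsymbol{\eta}}$ gives $\dot V=-k_P\boldsymbol{\delta}^{F\top}\mathbf{L}_{ff}\boldsymbol{\delta}^F$, and LaSalle applies. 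Consequently $\boldsymbol{\delta}^F\to\mathbf{0}$ exponentially for $t\ge T$. Finite escape on $[0,T]$ is excluded because the hand-point dynamics $\dot{\mathbf{h}}_i=\mathbf{s}_i-k_P\mathcal{F}_i-k_I\int\mathcal{F}_i$ is linear in $(\mathbf{h}^F,\boldsymbol{\eta})$ with bounded input $\mathbf{s}_i$. Equivalently, one can treat $\mathbf{s}$ as an input that vanishes after finite time and invoke Lemma~\ref{lm:vanish_input} with its exponential bound. Together with Assumption~\ref{assume:leader} for the leaders, this gives $\mathbf{h}(t)\to\mathbf{h}^*(t)$ exponentially.
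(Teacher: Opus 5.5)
Your proposal is correct and follows the paper's two-part structure. First comes a Lyapunov argument on $\tfrac12\|{}^i\mathbf{s}_i\|^2$, where the skew rotational term drops out; then comes the linear system in $(\boldsymbol{\delta}^F,\tilde{\boldsymbol{\vartheta}}^F)$ with the same Hurwitz matrix, driven by $\mathbf{s}^F$ and closed via Lemma~\ref{lm:vanish_input}. The differences are minor: you use the uniform $\gamma$-margin to get finite-time reaching where the paper uses the $-k\|{}^i\mathbf{s}_i\|^2$ term for exponential decay, you justify the Hurwitz claim through the blocks $s^2+k_P\lambda_j s+k_I\lambda_j$ (the paper only asserts it), and your sign $-\mathbf{1}_{n_f}\otimes\mathbf{v}_0$ in $\dot{\boldsymbol{\delta}}^F$ is the one actually consistent with the Corollary (the paper writes $+$, which is harmless for the conclusion).
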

\begin{proof}
{Due to the discontinuity of the control law, solutions to the closed-loop system are understood in the sense of Filippov, as the signum function $\text{sgn}(\cdot)$ is measurable and locally essentially bounded.} The stability proof of the proposed control law is divided into two parts. First, we show that the auxiliary variable $\mathbf{s}_i$ converges exponentially to the origin. Based on this result, we then prove that the hand points converge to their desired configuration. For the first part, we consider the Lyapunov function $V=\frac{1}{2}\|\mathbf{s}_i\|^2=\frac{1}{2}\|{}^i\mathbf{s}_i\|^2$. {We have $\dot V \in \dot{\tilde V}$ almost everywhere, where $\dot{\tilde V}=\nabla V^\top\mathcal{K}[\dot{\mathbf{s}}_i]$ denotes the set-valued Lie derivative of $V$.} The following yields
\begin{equation*}
\begin{aligned}
\dot{\tilde V}&=-k\|{}^i\mathbf{s}_i\|^2-\gamma\|{}^i\mathbf{s}_i\|_1+{}^i\mathbf{s}_i^\top
\begin{bmatrix}
\delta_{ui}\\
\delta_{vi}+l_i\delta_{ri}
\end{bmatrix}\\
&\leq -k\|{}^i\mathbf{s}_i\|^2-(\gamma-\|[\delta_{ui},\delta_{vi}+l_i\delta_{ri}]\|_{\infty})\|{}^i\mathbf{s}_i\|_1\\
&\leq -k\|{}^i\mathbf{s}_i\|^2.
\end{aligned}
\end{equation*}
{Here, we used the fact that ${}^i\mathbf{s}_i^\top r_{i}\begin{bmatrix}s_{2i}\\ -s_{1i}\end{bmatrix}=0$, $\mathbf{x}^\top\mathbf{y}\leq \|\mathbf{x}\|_1\|\mathbf{y}\|_{\infty}$, and $\mathbf{x}\mathcal{K}[\text{sgn}(\mathbf{x})]=\{|\mathbf{x}|\}$. Moreover, it follows from the first equality that $\dot{\tilde V}$ is singleton and $\dot{V}=\dot{\tilde V}$.} Similarly, it can be shown that the auxiliary variable ${}^i\mathbf{s}_i$ converges exponentially to the origin for all followers, i.e., $\mathbf{s}^F=[\mathbf{s}_{n_l+1},\dots,\mathbf{s}_{n}]^\top \rightarrow \mathbf{0}$ exponentially fast.
\\
For the second part, let's define the following stack vectors
\begin{equation}
\begin{aligned}
\boldsymbol{\vartheta}_i = \int _{0}^{t}\mathcal{F}_i(\alpha^*,\mathbf{h}(\tau))d\tau, \boldsymbol{\vartheta}=[\boldsymbol{\vartheta}_1^\top,\dots,\boldsymbol{\vartheta}_{n}^\top]^\top.
\end{aligned}
\end{equation}
It can be seen that $\dot{\boldsymbol{\vartheta}}=\mathbf{R}_{\mathcal{A}}^\top(\alpha^*)\mathbf{R}_{\mathcal{A}}(\alpha^*)\mathbf{h}$. The following closed loop dynamic can thus be obtained
\begin{equation*}
\begin{aligned}
\dot{\boldsymbol{\vartheta}}^F&=\mathbf{L}_{fl}(\alpha^*)\mathbf{h}^L+\mathbf{L}_{ff}(\alpha^*)\mathbf{h}^F=\mathbf{L}_{ff}(\alpha^*)\boldsymbol{\delta}^F,\\
\dot{\boldsymbol{\delta}}^F&=\dot{\mathbf{h}}^F+\mathbf{L}_{ff}^{-1}(\alpha^*) \mathbf{L}_{fl}(\alpha^*)\dot{\mathbf{h}}^L\\
&=-k_P\mathbf{L}_{ff}(\alpha^*)\boldsymbol{\delta}^F-k_I\boldsymbol{\vartheta}^F+\big(\mathbf{s}^F+\mathbf{1}_{n_f}\otimes \mathbf{v}_0\big),
\end{aligned}
\end{equation*}
where $\boldsymbol{\vartheta}^F=[\boldsymbol{\vartheta}_{n_l+1}^\top,\dots,\boldsymbol{\vartheta}_{n}^\top]^\top$. By denoting $\tilde{\boldsymbol{\vartheta}}^F=\boldsymbol{\vartheta}^F-\mathbf{1}_{n_f}\otimes \frac{\mathbf{v}_0}{k_I}$, an equivalent representation of the above equations is given as follows
\begin{equation}\label{eq:closed_loop_1}
\begin{aligned}
\begin{bmatrix}
\dot{\boldsymbol{\delta}}^F\\
\dot{\tilde{\boldsymbol{\vartheta}}}^F
\end{bmatrix}
=&
\begin{bmatrix}
-k_P\mathbf{L}_{ff}(\alpha^*) &-k_I\mathbf{I}_{2n_f}\\
\mathbf{L}_{ff}(\alpha^*) &\mathbf{0}
\end{bmatrix}
\begin{bmatrix}
\boldsymbol{\delta}^F\\
\tilde{\boldsymbol{\vartheta}}^F
\end{bmatrix}
+
\begin{bmatrix}
\mathbf{s}^F\\
\mathbf{0}
\end{bmatrix}.
\end{aligned}
\end{equation}
Since $\mathbf{L}_{ff}(\alpha^*)$ is positive definite and $k_P,k_I>0$, the state matrix in \eqref{eq:closed_loop_1} is Hurwitz. Moreover, because the input $\mathbf{s}^F$ vanishes exponentially, it follows from Lemma~\ref{lm:vanish_input} that the states $\boldsymbol{\delta}^F$ and $\tilde{\boldsymbol{\vartheta}}^F$ converge to the origin exponentially fast. The proof is completed.
\end{proof}
\begin{remark}[Internal Stability]
Theorem~\ref{thm:stability1} ensures the boundedness of $\dot{\mathbf{h}}$. Since $\mathcal{R}_i^\top\dot{\mathbf{h}}_i=[u_i,\,v_i+r_il_i]^\top$, $u_i$ is bounded. Furthermore, the boundedness of $v_i$ implies that of $r_i$, and vice versa. Thus, establishing the ultimate boundedness of the sway velocity $v_i$ is sufficient to guarantee internal stability. For example, for the surface vessel model~(\ref{eq:vessel}), by following the same procedure as \cite{quoc2023TCNS}, the sway velocity $v_i$ remains bounded if $\|\mathbf{v}_0\|_{\infty}<\frac{d_{22,i}l_i-d_{23,i}}{m_{11,i}}$ and $l_i>\max\{\frac{d_{23,i}}{d_{22,i}},\frac{m_{23,i}}{m_{22,i}}\}$. The proof is similar to that in \cite{quoc2023TCNS} and is therefore omitted.
\end{remark}

\subsection{Tracking Control Law Without Relative Velocity}
The above control law requires the relative velocity information among agents. In practice, relative velocities are often obtained by differentiating relative displacement measurements and are therefore sensitive to measurement noise. Thus, in this section, we propose another distributed tracking control law that does not rely on relative velocity measurements. The feedback linearization technique will be utilized with the following virtual control signal
$$
\boldsymbol{\nu}_i=\mathcal{R}_i
\begin{bmatrix}
g_{ui}-r_i^2 l_i+b_{u i} F_i \\ g_{vi}+l_i g_{ri}+\left(b_{v i}+b_{r i} l_i\right) T_{i}
\end{bmatrix}.
$$
Hence, by denoting $\boldsymbol{\zeta}_i(t) = \mathcal{R}_i[\delta_{ui},\delta_{vi}+l_i\delta_{ri}]^\top$, we get $\ddot{\mathbf{h}}_i=\boldsymbol{\nu}_i+\boldsymbol{\zeta}_i$ for $i=n_l+1,\dots,n$. The following matrix form can be obtained
\begin{equation}
\begin{aligned}
\dot{\mathbf{h}}&=\mathbf{v}_h=\bar{\mathbf{Z}}
\begin{bmatrix}
\mathbf{0}_{2n_l}\\
\mathbf{v}^F_h-\mathbf{1}_{n_f}\otimes\mathbf{v}_0
\end{bmatrix}+\mathbf{1}_n \otimes\mathbf{v}_0,\\
\dot{\mathbf{v}}_h&=\boldsymbol{\nu}+\boldsymbol{\zeta}=
\begin{bmatrix}
\mathbf{0}_{2n_l}\\
\boldsymbol{\nu}^F+\boldsymbol{\zeta^F}
\end{bmatrix},
\end{aligned}
\end{equation}
where $\bar{\mathbf{Z}}=\mathbf{Z}\otimes \mathbf{I}_2$, $\mathbf{Z}=\begin{bmatrix} \mathbf{0}_{n_l \times n_l} & \mathbf{0}_{n_l \times n_f} \\ \mathbf{0}_{n_f \times n_l} & \mathbf{I}_{n_f} \end{bmatrix} \otimes \mathbf{I}_2$, $\boldsymbol{\nu}=[\mathbf{0}_{2n_l}^\top,\boldsymbol{\nu}_{n_l+1}^\top,\dots,\boldsymbol{\nu}_{n}^\top]^\top$ and $\boldsymbol{\zeta}=[\mathbf{0}_{2n_l}^\top,\boldsymbol{\zeta}_{n_l+1}^\top,\dots,\boldsymbol{\zeta}_{n}^\top]^\top$. For each agent, let's define the auxiliary variable $\mathbf{s}_i=-\mathcal{F}_i(\alpha^*,\mathbf{h})$. The virtual control law is designed as follows
\begin{equation}\label{eq:virtual_law}
\begin{aligned}
\boldsymbol{\nu}_i &= \mathbf{s}_i+\dot{\boldsymbol{\eta}}_i+\gamma \text{sgn}(\boldsymbol{\eta}_i-\mathbf{v}_{hi}),\\
\dot{\boldsymbol{\Upsilon}}_i&=-k_1(\boldsymbol{\Upsilon}_i-\boldsymbol{\eta}_i),\\
\dot{\boldsymbol{\eta}}_i&=k_2\mathbf{s}_i-k_3\dot{\boldsymbol{\Upsilon}}_i=k_2\mathbf{s}_i+k_1k_3(\boldsymbol{\Upsilon}_i-\boldsymbol{\eta}_i),
\end{aligned}
\end{equation}
where $\boldsymbol{\Upsilon}_i$ and $\boldsymbol{\eta}_i$ are auxiliary variables with arbitrary initial conditions, and $k_1,k_2,k_3>0$ are positive control gains. The gain $\gamma>0$ will be designed later. Compared with (\ref{eq:control_velocity}), the proposed control law only requires relative displacement information and the local velocity. The following matrix form can thus be obtained
\begin{equation}\label{eq:control2_closed}
\begin{aligned}
\boldsymbol{\nu}&=\mathbf{s}+\dot{\boldsymbol{\eta}}+\gamma\text{sgn}\big(\bar{\mathbf{Z}}(\boldsymbol{\eta}-\mathbf{v}_h)\big),\\
\mathbf{s}&=
\begin{bmatrix}
\mathbf{0}_{2n_l}\\
\mathbf{s}^F
\end{bmatrix}
=-\bar{\mathbf{Z}}\mathbf{R}_{\mathcal{A}}^\top(\alpha^*) \mathbf{R}_{\mathcal{A}} (\alpha^*)\mathbf{h},\\
\dot{\boldsymbol{\Upsilon}}&=-k_1
\begin{bmatrix}
\mathbf{0}_{2n_l}\\
\boldsymbol{\Upsilon}^F-\boldsymbol{\eta}^F
\end{bmatrix}
=-k_1\bar{\mathbf{Z}}(\boldsymbol{\Upsilon}-\boldsymbol{\eta}),\\
\dot{\boldsymbol{\eta}}&=
\begin{bmatrix}
\mathbf{0}_{2n_l}\\
\boldsymbol{\eta}^F
\end{bmatrix}
=k_2\mathbf{s}+k_1k_3\bar{\mathbf{Z}}(\boldsymbol{\Upsilon}-\boldsymbol{\eta}).
\end{aligned}
\end{equation}
\begin{theorem}
Consider the Problem~\ref{prob} under Assumptions \ref{assume:desired_formation} and \ref{assume:leader}, and assume that $\boldsymbol{\zeta}(t)$ is uniformly continuous. Under the control law~(\ref{eq:virtual_law}) with $\gamma > \sqrt{2}\|[\delta_{ui},\delta_{vi}+l_i\delta_{ri}]\|_{\infty}$, $\mathbf{h}(t) \rightarrow \mathbf{h}^*(t)$ as $t \rightarrow \infty$.  
\end{theorem}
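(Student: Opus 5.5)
Following the proof of Theorem~\ref{thm:stability1}, the argument splits into two parts: a sliding-mode part, which forces each follower's hand-point velocity to match the internal signal $\boldsymbol{\eta}_i$ in finite time, and a linear part, which analyses the remaining reduced-order closed loop.

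\emph{Step 1: finite-time sliding on $\mathbf{e}_i:=\boldsymbol{\eta}_i-\mathbf{v}_{hi}$.}
For each follower $i$, define $\mathbf{e}_i:=\boldsymbol{\eta}_i-\mathbf{v}_{hi}$. From $\ddot{\mathbf{h}}_i=\boldsymbol{\nu}_i+\boldsymbol{\zeta}_i$ and \eqref{eq:virtual_law},
\[
\dot{\mathbf{e}}_i=-\mathbf{s}_i-\gamma\,\mathrm{sgn}(\mathbf{e}_i)-\boldsymbol{\zeta}_i .
\]
Note that the term $\dot{\boldsymbol{\eta}}_i$ cancels. Take $V_i=\frac12\|\mathbf{e}_i\|^2$ and work with Filippov solutions and the set-valued Lie derivative, exactly as in Theorem~\ref{thm:stability1}. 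Using $\|\boldsymbol{\zeta}_i\|_\infty\le\|\boldsymbol{\zeta}_i\|\le\|[\delta_{ui},\delta_{vi}+l_i\delta_{ri}]\|\le\sqrt2\,\|[\delta_{ui},\delta_{vi}+l_i\delta_{ri}]\|_\infty$, this gives
\[
\dot V_i\le-\bigl(\gamma-\sqrt2\|[\cdot]\|_\infty\bigr)\|\mathbf{e}_i\|_1+\|\mathbf{e}_i\|\,\|\mathbf{s}_i\| .
\]
The $\sqrt2$ factor comes from the rotation $\mathcal{R}_i$ mixing the two components. The obstacle here is the term $\mathbf{s}_i=-\mathcal{F}_i(\alpha^*,\mathbf{h})$, which the gain condition does not dominate.

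I would handle it in one of two ways. The first is to show that $\mathbf{h}$ cannot escape in finite time: before sliding, $\mathbf{e}_i$ is bounded on compact intervals, and the linear $(\mathbf{h},\boldsymbol{\eta},\boldsymbol{\Upsilon})$ dynamics driven by a bounded $\mathbf{e}$ have no finite escape, so $\mathbf{s}$ stays locally bounded. The second, which I would try first, is to read the statement as intended: $\mathbf{s}_i$ is a matched, known term and is effectively absorbed through the uniform-continuity assumption on $\boldsymbol{\zeta}$. If finite-time reaching cannot be forced without a larger gain, I would fall back on asymptotic convergence $\mathbf{e}_i\to\mathbf{0}$. This would use Barbalat's lemma: $\dot{\mathbf{e}}_i$ is bounded, and uniform continuity of $\boldsymbol{\zeta}$ makes the relevant signals uniformly continuous. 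I expect this step to be the main obstacle.

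\emph{Step 2: reduced closed loop on the sliding surface.}
On (or asymptotically near) the sliding surface we have $\mathbf{v}^F_h=\boldsymbol{\eta}^F+\boldsymbol{\epsilon}$, where $\boldsymbol{\epsilon}:=-\mathbf{e}^F$ is vanishing. Introduce the shifted variables
\[
\tilde{\boldsymbol{\eta}}=\boldsymbol{\eta}^F-\mathbf{1}_{n_f}\otimes\mathbf{v}_0,\qquad \tilde{\boldsymbol{\Upsilon}}=\boldsymbol{\Upsilon}^F-\mathbf{1}_{n_f}\otimes\mathbf{v}_0 .
\]
By Lemma~\ref{lmm:desired_formation}, $\mathbf{s}^F=-\mathbf{L}_{ff}(\alpha^*)\boldsymbol{\delta}^F$, and the Corollary gives $\dot{\mathbf{h}}^{F*}=\mathbf{1}_{n_f}\otimes\mathbf{v}_0$. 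Together these yield the linear system
\[
\dot{\boldsymbol{\delta}}^F=\tilde{\boldsymbol{\eta}}+\boldsymbol{\epsilon},\quad
\dot{\tilde{\boldsymbol{\eta}}}=-k_2\mathbf{L}_{ff}\boldsymbol{\delta}^F+k_1k_3(\tilde{\boldsymbol{\Upsilon}}-\tilde{\boldsymbol{\eta}}),\quad
\dot{\tilde{\boldsymbol{\Upsilon}}}=-k_1(\tilde{\boldsymbol{\Upsilon}}-\tilde{\boldsymbol{\eta}}).
\]

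\emph{Step 3: Hurwitz property of the state matrix.}
Since $\mathbf{L}_{ff}$ is symmetric positive definite, diagonalize it with eigenvalues $\lambda>0$. The state matrix then decouples into $3\times3$ blocks with characteristic polynomial
\[
s^3+k_1(1+k_3)s^2+k_2\lambda s+k_1k_2\lambda .
\]
By the Routh--Hurwitz criterion this polynomial is Hurwitz if and only if $k_1(1+k_3)k_2\lambda>k_1k_2\lambda$, i.e.\ $k_3>0$, which holds.

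\emph{Step 4: conclusion.}
Apply Lemma~\ref{lm:vanish_input} with the vanishing input $\boldsymbol{\epsilon}$ (which is identically zero after the reaching time if sliding is finite-time). This gives $\boldsymbol{\delta}^F\to\mathbf{0}$, and hence $\mathbf{h}(t)\to\mathbf{h}^*(t)$. Exponential convergence is not claimed because reaching may only be asymptotic.
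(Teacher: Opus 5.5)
\textbf{There is a genuine gap in Step~1, the step you yourself flag.}

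In Theorem~\ref{thm:stability1} the control law cancels the derivative of the coupling terms, which is why it needs relative velocities. That cancellation decouples the sliding variable, so a cascade argument works. Here nothing cancels $\mathbf{s}_i=-\mathcal{F}_i(\alpha^*,\mathbf{h})$ in $\dot{\mathbf{e}}_i=-\mathbf{s}_i-\gamma\,\mathrm{sgn}(\mathbf{e}_i)-\boldsymbol{\zeta}_i$. The term is known, but the law deliberately adds it to $\boldsymbol{\nu}_i$ rather than cancelling it. Its size depends on the initial formation error and is not dominated by $\gamma$. So $-\mathbf{e}_i^\top\mathbf{s}_i$ in $\dot V_i$ is sign-indefinite, and neither finite-time reaching nor $\mathbf{e}_i\to\mathbf{0}$ follows.

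None of your fallbacks repairs this:
\begin{itemize}
\item Absence of finite escape gives only local boundedness of $\mathbf{s}$, not a bound $\|\mathbf{s}_i\|_\infty<\gamma-\|\boldsymbol{\zeta}_i\|_\infty$.
\item Uniform continuity of $\boldsymbol{\zeta}$ says nothing about $\mathbf{s}_i$.
\item Barbalat's lemma needs a function with a limit whose derivative controls $\mathbf{e}_i$, and $V_i=\frac12\|\mathbf{e}_i\|^2$ does not provide one.
\end{itemize}
Without $\mathbf{e}^F\to\mathbf{0}$, the input $\boldsymbol{\epsilon}$ in your Step~2 is not known to vanish (or even to be bounded). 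Hence Lemma~\ref{lm:vanish_input} cannot be invoked.

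\textbf{The missing idea is a composite Lyapunov function rather than a decoupling.} Take $V=\frac12(\boldsymbol{\delta}^F)^\top\mathbf{L}_{ff}(\alpha^*)\boldsymbol{\delta}^F+\frac12\|\mathbf{e}^F\|^2+\frac{1}{2k_2}\|\tilde{\boldsymbol{\eta}}\|^2+\frac{k_3}{2k_2}\|\tilde{\boldsymbol{\Upsilon}}\|^2$, where $\mathbf{e}^F=\boldsymbol{\eta}^F-\mathbf{v}^F_h$. The $\mathbf{s}_i$ term sits in $\boldsymbol{\nu}_i$ precisely so that three cross terms sum to zero: $-(\mathbf{s}^F)^\top(\mathbf{v}^F_h-\mathbf{1}_{n_f}\otimes\mathbf{v}_0)$ from the first summand, $-(\mathbf{e}^F)^\top\mathbf{s}^F$ from the second, and $\tilde{\boldsymbol{\eta}}^\top\mathbf{s}^F$ from the third. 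The filter terms combine to $-\frac{k_1k_3}{k_2}\|\boldsymbol{\Upsilon}^F-\boldsymbol{\eta}^F\|^2$. This gives $\dot V\le-(\gamma-\|\boldsymbol{\zeta}\|_\infty)\|\mathbf{e}^F\|_1-\frac{k_1k_3}{k_2}\|\boldsymbol{\Upsilon}^F-\boldsymbol{\eta}^F\|^2$. The paper then applies Barbalat's lemma twice to $\boldsymbol{\Upsilon}^F-\boldsymbol{\eta}^F$, which yields $\dot{\boldsymbol{\eta}}^F\to\mathbf{0}$ and hence $\mathbf{s}^F\to\mathbf{0}$.

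Your Steps~2--4 are correct: the characteristic polynomial and the Routh--Hurwitz condition $k_3>0$ check out. So with this $V$ you could also finish your own way. All signals are bounded, $\int_0^\infty\|\mathbf{e}^F\|_1\,dt<\infty$, and $\dot{\mathbf{e}}^F$ is bounded, so $\mathbf{e}^F\to\mathbf{0}$. Lemma~\ref{lm:vanish_input} then applies to your Hurwitz linear system. That route avoids the second Barbalat argument and does not need uniform continuity of $\boldsymbol{\zeta}$. Without the composite function, however, Step~1 does not go through.
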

\begin{proof}
Considering the following Lyapunov function
\begin{equation*}
\begin{aligned}
V=&\frac{1}{2}(\boldsymbol{\delta}^F)^\top\mathbf{L}_{ff}(\alpha^*)\boldsymbol{\delta}^F+\frac{1}{2}\|\boldsymbol{\eta}^F-\mathbf{v}^F_h\|^2\\
&+\frac{1}{2k_2}\|\boldsymbol{\eta}^F-\mathbf{1}_{n_f}\otimes \mathbf{v}_0\|^2+\frac{k_3}{2k_2}\|\boldsymbol{\Upsilon}^F-\mathbf{1}_{n_f}\otimes \mathbf{v}_0\|^2\\
=&\frac{1}{2}(\mathbf{h}-\mathbf{h}^*)^\top\mathbf{L}(\alpha^*)(\mathbf{h}-\mathbf{h}^*)+\frac{1}{2}\|\bar{\mathbf{Z}}(\boldsymbol{\eta}-\mathbf{v}_h)\|^2\\
&+\frac{1}{2k_2}\|\bar{\mathbf{Z}}(\boldsymbol{\eta}-\bar{\mathbf{v}}_0)\|^2+\frac{k_3}{2k_2}\|\bar{\mathbf{Z}}(\boldsymbol{\Upsilon}-\bar{\mathbf{v}}_0)\|^2,
\end{aligned}
\end{equation*}
where $\bar{\mathbf{v}}_0=\mathbf{1}_{n}\otimes \mathbf{v}_0$. {Recalling that $\bar{\mathbf{Z}}^2=\bar{\mathbf{Z}}$, we obtain $\dot V \in^{a.e.} \dot{\tilde V}$, where}
{\small
\begin{equation*}
\begin{aligned}
 \dot{\tilde V}=&(\mathbf{h}-\mathbf{h}^*)^\top\mathbf{L}(\alpha^*)(\bar{\mathbf{Z}}(\mathbf{v}_h-\bar{\mathbf{v}}_0)+\bar{\mathbf{v}}_0)\\
 &+(\boldsymbol{\eta}-\mathbf{v}_h)^\top\bar{\mathbf{Z}}(\dot{\boldsymbol{\eta}}-\dot{\boldsymbol{\eta}}-\mathbf{s}-\gamma\mathcal{K}[\text{sgn}(\bar{\mathbf{Z}}(\boldsymbol{\eta}-\mathbf{v}_h))]-\boldsymbol{\zeta})\\
 &+\frac{1}{k_2}(\boldsymbol{\eta}-\bar{\mathbf{v}}_0)^\top\bar{\mathbf{Z}}\dot{\boldsymbol{\eta}}+\frac{k_3}{k_2}(\boldsymbol{\Upsilon}-\bar{\mathbf{v}}_0)^\top\bar{\mathbf{Z}}\dot{\boldsymbol{\Upsilon}}\\
 =&-\mathbf{s}^\top(\mathbf{v}_h-\bar{\mathbf{v}}_0)-\boldsymbol{\eta}^\top\mathbf{s}+\mathbf{v}_h^\top\mathbf{s}-\gamma\|\bar{\mathbf{Z}}(\boldsymbol{\eta}-\mathbf{v}_h)\|_1\\
 &-(\boldsymbol{\eta}-\mathbf{v}_h)^\top\boldsymbol{\zeta}+(\boldsymbol{\eta}-\bar{\mathbf{v}}_0)^\top\mathbf{s}-\frac{k_1k_3}{k_2}\|\boldsymbol{\Upsilon}-\boldsymbol{\eta}\|^2\\
 =&-\gamma\|\boldsymbol{\eta}^F-\mathbf{v}^F_h\|_1-(\boldsymbol{\eta}^F-\mathbf{v}^F_h)^\top\boldsymbol{\zeta}^F-\frac{k_1k_3}{k_2}\|\boldsymbol{\Upsilon}^F-\boldsymbol{\eta}^F\|^2\\
 \leq&-(\gamma-\|\boldsymbol{\zeta}\|_{\infty})\|\boldsymbol{\eta}^F-\mathbf{v}^F\|_1-\frac{k_1k_3}{k_2}\|\boldsymbol{\Upsilon}^F-\boldsymbol{\eta}^F\|^2\\
 \leq&-\frac{k_1k_3}{k_2}\|\boldsymbol{\Upsilon}^F-\boldsymbol{\eta}^F\|^2.
\end{aligned}
\end{equation*}}
\\
{It can be seen that $\dot{\tilde V}$ is singleton and $\dot{V}=\dot{\tilde V}$.} The last inequality holds from the fact that $\|\mathcal{R}\mathbf{x}\|_{\infty}\leq\sqrt{2}\|\mathbf{x}\|_{\infty}$ for any $\mathcal{R}\in SO(2)$ and any $\mathbf{x}\in\mathbb{R}^2$. Thus, $V$ is nonincreasing and $\lim_{t \to \infty}V(t)$ exists. Since $\mathbf{v}_0$ is constant, it follows that $\boldsymbol{\delta}^F$, $\mathbf{s}^F$, $\mathbf{v}^F_h$, $\boldsymbol{\eta}^F$, and $\boldsymbol{\Upsilon}^F$ are bounded. From (\ref{eq:control2_closed}) and the assumption that $\boldsymbol{\zeta}$ is uniformly continuous, we further conclude that $\dot{\mathbf{v}}_h^F$, $\dot{\boldsymbol{\eta}}^F$, and $\dot{\boldsymbol{\Upsilon}}^F$ are bounded. Hence, $\boldsymbol{\eta}^F-\mathbf{v}^F_h$, $\boldsymbol{\Upsilon}^F-\boldsymbol{\eta}^F$, and $\dot{V}$ are uniformly continuous. By Barbalat’s lemma, $\dot{V}(t)\to0$ as $t\to\infty$, i.e., $\boldsymbol{\Upsilon}^F-\boldsymbol{\eta}^F \to \mathbf{0}_{2n_f}$, which further implies $\dot{\boldsymbol{\Upsilon}}^F \to \mathbf{0}_{2n_f}$. Moreover, from $\ddot{\boldsymbol{\Upsilon}}^F-\ddot{\boldsymbol{\eta}}^F = -k_1(1+k_3)\big(\dot{\boldsymbol{\Upsilon}}^F-\dot{\boldsymbol{\eta}}^F\big) - k_2\dot{\mathbf{s}}^F$, and the boundedness of $\dot{\mathbf{s}}^F$, it follows that $\dot{\boldsymbol{\Upsilon}}^F-\dot{\boldsymbol{\eta}}^F$ is uniformly continuous. Applying Barbalat's lemma again yields $\dot{\boldsymbol{\Upsilon}}^F-\dot{\boldsymbol{\eta}}^F \to \mathbf{0}_{2n_f}$, which together with $\dot{\boldsymbol{\Upsilon}}^F \to \mathbf{0}_{2n_f}$ implies that $\dot{\boldsymbol{\eta}}^F \to \mathbf{0}_{2n_f}$. Consequently, $\mathbf{s}^F \to \mathbf{0}_{2n_f}$, namely, $\mathbf{h}(t)\to\mathbf{h}^*(t)$ as $t\to\infty$.
\end{proof}
\section{Simulation Results}\label{sec:sim}
\begin{figure}
\centering
\begin{tikzpicture}[
roundnode/.style={circle, draw=black, thick, minimum size=2mm,inner sep= 0.25mm},
squarednode/.style={rectangle, draw=red!60, fill=red!5, very thick, minimum size=5mm},
]
 \node[roundnode]   (v4)   at   (0,0) {$4$};
    \node[roundnode, draw=red, text=red]   (v1)   at   (1.5,0) {$1$};
    \node[roundnode]   (v3)   at   (0,1.5) {$3$};
    \node[roundnode, draw=red, text=red]   (v2)   at   (1.5,1.5) {$2$};
    \draw[-,thick] (v1)--(v2)--(v3)--(v4)--(v1);
    \draw[-,thick] (v4)--(v2);
\end{tikzpicture} 
    \caption{The desired formation shape of $\mathbb{A}(\mathcal{V},\mathcal{A},\mathbf{h}^*)$.
    }\label{fid:Desired_formation}
\end{figure}
Consider a formation of four USVs modeled in (\ref{eq:vessel}) with two leaders, where the interaction graph consisting of $q=2$ triangles is shown in Fig.~\ref{fid:Desired_formation}. The triangular angle set is $\mathcal{A} = \{(1,2,4),(2,4,1),(1,4,2),(2,3,4),(3,4,2),(4,2,3)\}$. The angle constraints are $\alpha^*_{421}=\frac{\pi}{4},\alpha^*_{214}=\frac{\pi}{2},\alpha^*_{324}=\frac{\pi}{4},\alpha^*_{243}=\frac{\pi}{4}$. The following matrices can be obtained
\begin{equation*}
\small    
\begin{aligned}
\mathbf{A}_1^{\triangle 214}
&= 
\begin{bmatrix}
0.7071 &0.7071\\
-0.7071 &0.7071
\end{bmatrix},
\mathbf{A}_2^{\triangle 214}
=\begin{bmatrix}
0 &-0.7071\\
0.7071 &0
\end{bmatrix},
\\
\mathbf{A}_4^{\triangle 214}
&= 
\begin{bmatrix}
-0.7071 &0\\
0 &-0.7071
\end{bmatrix}, \mathbf{A}_2^{\triangle 243}
= 
\begin{bmatrix}
0.5 &-0.5\\
0.5 &0.5
\end{bmatrix},\\
\mathbf{A}_4^{\triangle 243}
&= 
\begin{bmatrix}
0.5 &0.5\\
-0.5 &0.5
\end{bmatrix},
\mathbf{A}_3^{\triangle 243}
= 
\begin{bmatrix}
-1 &0\\
0 &-1
\end{bmatrix}.
\end{aligned}
\end{equation*}

The leaders' hand points are initially positioned at $\mathbf{h}_1(0) = [5,5]^\top$ and $\mathbf{h}_2(0) = [5,10]^\top$. Moreover, the leaders move at a constant velocity $\mathbf{v}_0 = [0.4,0.1]^\top$. Although the proposed results are applicable to heterogeneous USVs, for simplicity, we consider a homogeneous system in which all agents share the same dynamical parameters, given as $m_{11,i}=220 \text{ kg}$, $ m_{22,i}=295 \text{ kg}$, $m_{33,i}=745\text{ kg.m}^2$, $m_{23,i}=3\text{ kgm}$, $d_{11,i}=75\text{ kg/s}$, $d_{22,i}=105\text{ kg/s}$, $d_{33,i}=50\text{ kg.m}^2\text{/s}$, $d_{23,i}=-25\text{ kgm/s}$ and $l_i=1.4 \text{ m}$ for $i=3,4$. In addition, the input disturbances are chosen as $w_{ui}=50+30\sin(2t)+30\sin(10t)$ $(N)$, $w_{vi}=0$, and $w_{ri}=100+60\sin(2t)+40\sin(10t)$ $(Nm)$.

The simulation results for the control law in \eqref{eq:control_velocity}, with $k = 3$, $k_P = 1.5$, $k_I = 0.15$, and $\gamma = 2$, are presented in Fig.~\ref{fig:PD}. Meanwhile, the simulation results for the control law in \eqref{eq:virtual_law} with $k_1 = 0.5$, $k_2 = k_3 = 5$, and $\gamma = 5$, are demonstrated in Fig.~\ref{fig:adaptive}. Overall, both control strategies successfully steer the followers' hand points along their desired trajectories, precisely achieving the prescribed angle-constrained formation.
\begin{figure}[t!]
    \centering
    \subfloat[]{\includegraphics[width=0.9\linewidth]{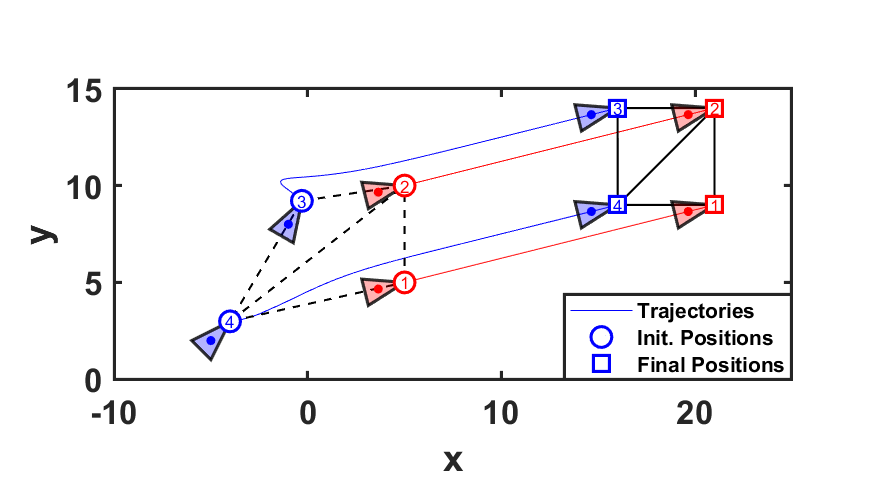}} \\
    \vspace{-4mm}
    \subfloat[]{\includegraphics[width=0.6\linewidth]{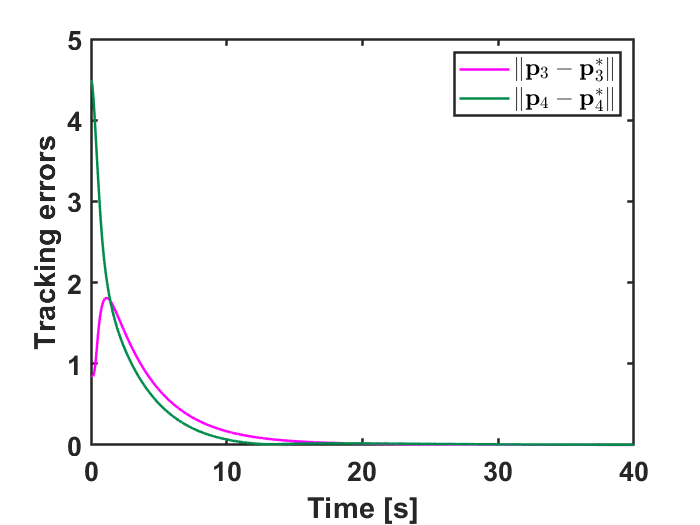}} 
    \caption{Simulation 1: Formation tracking control of four USVs under the control law \eqref{eq:control_velocity}. (a) The trajectories of the followers and leaders after 40 seconds are colored in blue and red, respectively. (b) The tracking errors of each follower.}
    \label{fig:PD}
\end{figure}
\begin{figure}[t!]
    \centering
    \subfloat[]{\includegraphics[width=0.9\linewidth]{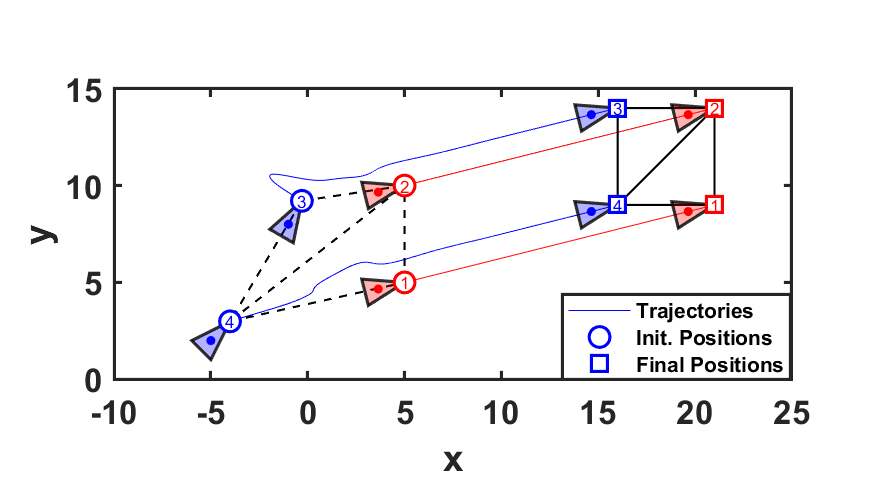}} \\
    \vspace{-4mm}
    \subfloat[]{\includegraphics[width=0.6\linewidth]{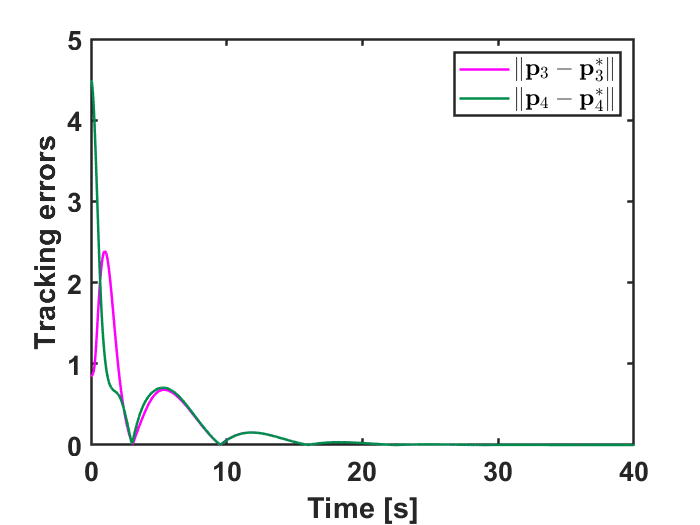}} 
    \caption{Simulation 2: Formation tracking control of four USVs under the control law \eqref{eq:virtual_law}. (a) The trajectories of the followers and leaders after 40 seconds are colored in blue and red, respectively. (b) The tracking errors of each follower.}
    \label{fig:adaptive}
\end{figure}

\section{Conclusion}\label{sec:conclude}
In conclusion, two distributed control laws are proposed for the angle-based formation control problem of underactuated planar agents subject to input disturbances. The desired formation, defined in terms of the agents’ hand points, is assumed to be triangularly angle rigid, while the leaders are assumed to move at constant velocity. The first control law requires relative velocity measurements, whereas the second relies only on relative displacement and local velocity information. The stability of both control laws is rigorously established using different analytical techniques. Potential future directions include extending the results to maneuvering leaders capable of translation, rotation, and scaling, as well as designing control laws with adaptive gains when the disturbance bounds are unknown.





\bibliographystyle{IEEEtran}
\bibliography{ref}

\end{document}